\newcommand\be{\begin{equation}}
\newcommand\ee{\end{equation}}
\newcommand\bea{\begin{eqnarray}}
\newcommand\eea{\end{eqnarray}}
\theoremstyle{plain}
\newtheorem{theorem}{Theorem}
\newtheorem{proposition}[theorem]{Proposition}
\theoremstyle{definition}
\numberwithin{theorem}{section}
\numberwithin{equation}{section}
\def\dd{{\rm d}}
\def\eee{{\rm e}}
\def\>#1{{\bf #1}}
 \def\gam{\gamma}
\begin{document}

\
  \vskip0.5cm

 \noindent
 {\Large \bf  Solutions by quadratures of complex Bernoulli differential equations\\[4pt] and their quantum deformation}

\medskip
\medskip
\medskip

\begin{center}

{\sc  Rutwig Campoamor-Stursberg$^{1,2}$, Eduardo Fern\'andez-Saiz$^{3}$\\[2pt] and Francisco J.~Herranz$^4$}

\end{center}

\medskip

 \noindent
$^1$ Instituto de Matem\'atica Interdisciplinar, Universidad Complutense de Madrid, Plaza de Ciencias 3, E-28040 Madrid,  Spain

\noindent
$^2$ Departamento de \'Algebra, Geometr\'{\i}a y Topolog\'{\i}a,  Facultad de Ciencias
Matem\'aticas, Universidad Complutense de Madrid, Plaza de Ciencias 3, E-28040 Madrid, Spain

\noindent
{$^3$ Department of Quantitative Methods, CUNEF Universidad,   E-28040  Madrid, Spain}

\noindent
{$^4$ Departamento de F\'isica, Universidad de Burgos,
E-09001 Burgos, Spain}

 \medskip

\noindent  E-mail: {\small
 \href{mailto:rutwig@ucm.es}{rutwig@ucm.es}, \href{mailto:e.fernandezsaiz@cunef.edu}{e.fernandezsaiz@cunef.edu}, \href{mailto:fjherranz@ubu.es}{fjherranz@ubu.es}
}

\medskip

\begin{abstract}
\noindent It is shown that the complex Bernoulli differential equations admitting the supplementary structure of a Lie--Hamilton system related to the book algebra $\mathfrak{b}_2$ can always be solved by quadratures, providing an explicit solution of the equations. In addition, considering the quantum deformation of Bernoulli equations, their canonical form is obtained and an exact solution by quadratures is deduced as well. It is further shown that the approximations of $k^{th}$-order in the deformation parameter from the quantum deformation are also integrable by quadratures, although an explicit solution cannot be obtained in general. Finally, the multidimensional quantum deformation of the  book Lie--Hamilton systems is studied, showing that, in contrast to the multidimensional analogue of the undeformed system, the resulting system is coupled in a non-trivial form.
  \end{abstract}
\medskip
\medskip

\noindent
MSC:   17B66, 17B80, 34A26, 34A34

\medskip

\noindent
PACS:  {02.20.Uw, 02.20.Sv, 02.60.Lj}

\medskip

\noindent{KEYWORDS}: non-autonomous differential equations;  Lie systems;  Lie-Hamilton systems; book algebra; quantum groups.
 \newpage

\tableofcontents


\section{Introduction}
\label{s1}

The obtainment of explicit solutions to (ordinary or partial) differential equations, as well as finding effective criteria that determine whether such solutions can be found at all, is certainly one of the central research problems, not only for the structural theory of differential equations, but also for their applications, notably for dynamical systems. The beginning of the systematic study and classification of differential equations goes back to the XIX century, in an analytical context, where the pioneering work of Goursat, Picard or Painlev\'e, among others, emerged~\cite{Gra,Page,Pain}, complementing the purely geometrical approach initiated by Poincar\'e, that has evolved to constitute one of the most important current techniques~\cite{Arn,Dua,Chandra2}. A somewhat different ansatz, based on group theory, and taking into account the symmetry properties, was developed by Lie, leading to the modern symmetry method, that provides a formal explanation for several of the already known solution methods~\cite{Lie,Dik,Ince}. Indeed, the Lie symmetry method is an effective tool for either reducing the order of an equation, to find a canonical representative of the equation, and furthermore, to determine criteria that ensure that the equation is linearizable. In this context, several procedures have been proposed to linearize and solve systems of differential equations, as well as to guarantee their integrability~\cite{Nuc,Lak}. One auxiliary tool, whenever an explicit solution is either not available or excessively cumbersome, is to determine the existence of a (nonlinear) superposition principle, thus reducing the problem to finding a certain number of particular solutions that, in combination with some significant constants, allows to write the general solution. A characterization of this property was obtained by Lie himself in~\cite{LSc} using group theoretical methods, hence establishing a technique that has been extended by several authors in different directions (see e.g.~\cite{Vess,Wei,PW,Bun,Reid,Car,CGL,Grun} and references therein). In some circumstances, even if it is known that an equation admits a superposition principle, it may go unnoticed that, after an appropriate local diffeomorphism, the differential equation admits an explicit solution in the new coordinates.

Consider for instance a complex function $w(t)=u(t)+ {\rm i} v(t)$ and the first-order equation
\begin{equation}\label{be2}
\frac{{\rm d}w}{{\rm d}t} =f(t)w+ g(t)w^n
\end{equation}
for $n=2$ and arbitrary real functions $f(t),\, g(t)$, which is a special case of the so-called complex Bernoulli equation~\cite{Zoladek}. Separating the real and imaginary parts, we are led to the first-order system
\begin{equation}\label{be1}
\begin{split}
\frac{{\rm d}u}{{\rm d}t}& =f(t)u +g(t)(u^2-v^2),\\[2pt]
\frac{{\rm d}v}{{\rm d}t}& =f(t)v+ 2g(t)uv.
\end{split}
\end{equation}
Solving the second equation with respect to $u$ we are led to
\begin{equation}
u=\frac{\frac{{\rm d}v}{{\rm d}t}-f(t)v}{2g(t)v},
\end{equation}
and the second-order equation reads as
\begin{equation}\label{be3}
\frac{{\rm d}^2v}{{\rm d}t^2} =\frac{3}{2v}\left(\frac{{\rm d}v}{{\rm d}t}\right)^2+ \frac{1}{g(t)}\frac{{\rm d}g}{{\rm d}t}\frac{{\rm d}v}{{\rm d}t}+\left( \frac{{\rm d}f}{{\rm d}t}-\frac{1}{2}f(t)^2-\frac{f(t)}{g(t)}\frac{{\rm d}g}{{\rm d}t}\right)v-2g(t)^2v^3.
\end{equation}
By means of the change of the dependent variable $v=\xi^{-2}$, the latter equation reduces to
\begin{equation}\label{be3a}
\frac{{\rm d}^2\xi}{{\rm d}t^2} =\frac{1}{g(t)}\frac{{\rm d}g}{{\rm d}t}\frac{{\rm d}\xi}{{\rm d}t}-\frac{1}{2}\left( \frac{{\rm d}f}{{\rm d}t}-\frac{1}{2}f(t)^2-\frac{f(t)}{g(t)}\frac{{\rm d}g}{{\rm d}t}\right)\xi+\frac{g(t)^2}{\xi^3}.
\end{equation}
This equation, which is not linear, can be shown to possess a Lie point symmetry algebra isomorphic to $\mathfrak{sl}(2,\mathbb{R})$~\cite{C131} (equation (\ref{be3a}) actually appears as a perturbation of linear homogeneous ODE preserving  an $\mathfrak{sl}(2,\mathbb{R})$-subalgebra of Noether symmetries), and by application of the Lie symmetry method~\cite{Olv}, it can be reduced to a linear equation by means of a non-point transformation~\cite{Le3}, from which the general solution of (\ref{be2}) is deduced. However, this procedure cannot be applied directly, without enormous computational complication, for integer values $n\geq 3$, as the real and imaginary parts are not separable:
\begin{equation}\label{be4}
\begin{split}
\frac{{\rm d}u}{{\rm d}t}& =f(t)u +g(t)\sum_{j=1}^{\left[\frac{n}{2}\right]+1}\frac{(-1)^{j+1} n!}{(2j-2)!(n+2-2j)!}u^{n+2-2j}v^{2j-2},\\[0.1cm]
\frac{{\rm d}v}{{\rm d}t}& =f(t)v+ g(t)\sum_{j=1}^{\left[\frac{n}{2}\right]+1}\frac{(-1)^{j+1} n!}{(2j-1)!(n+1-2j)!}u^{n+1-2j}v^{2j-1}.
\end{split}
\end{equation}
The system, however, is endowed with an additional structure that simplifies its analysis. It can be easily seen that the above system can be written in terms of the time-dependent vector field
\begin{equation}
{\bf X}= f(t){\bf X}_1+ g(t){\bf X}_2
\end{equation}
with the $t$-independent vector fields given by
\begin{equation}\label{xxx}
\begin{split}
{\bf X}_1&=u\frac{\partial}{\partial u}+v\frac{\partial}{\partial v},\\[4pt]
{\bf X}_2&=\sum_{j=1}^{\left[\frac{n}{2}\right]+1}\frac{(-1)^{j+1} n!}{(2j-2)!(n+2-2j)!}u^{n+2-2j}v^{2j-2}\frac{\partial}{\partial u}\\[2pt]
&\qquad\quad+
\sum_{j=1}^{\left[\frac{n}{2}\right]+1}\frac{(-1)^{j+1} n!}{(2j-1)!(n+1-2j)!}u^{n+1-2j}v^{2j-1}\frac{\partial}{\partial v} ,
\end{split}
\end{equation}
satisfying the commutator
\be
\left[{\bf X}_1,{\bf X}_2\right]=(n-1){\bf X}_2.
\ee
This implies that (\ref{be4}) possesses the structure of a Lie system (see e.g.~\cite{LSc,LuSa} and references therein) with a Vessiot--Guldberg algebra isomorphic to the so-called book algebra $\mathfrak{b}_2$~\cite{LH2015, BHLS, Ballesteros6, BCFHL, CFH}. Lie systems are characterized by the remarkable property of admitting a (nonlinear) superposition principle. As will be deduced later, the system (\ref{be4}) admits a supplementary structure, namely that of a Lie--Hamilton system (LH in short), that allows us to reduce it to a linear system, from which an exact solution by quadratures will be obtained. Actually, as shown in \cite{BHLS}, the complex Bernoulli equation (\ref{be2}) with complex coefficients admits the structure of a Lie system, but not of an LH system for arbitrary (complex) choices of the coefficient functions $f,g$. Hence, if an LH is known to be solvable by quadratures for a (local) system of coordinates, then, for any (local) diffeomorphism, the transformed system will also be solvable by quadratures. This fact will enable us to obtain an exact solution of  (\ref{be4}) for arbitrary values of $n$ and real coefficients $f(t),\, g(t)$.

LH systems constitute a natural generalization of Lie systems, with the salient property of being related to dynamical systems through a Hamiltonian \cite{LuSa,LH2015}. Precisely this fact provides a systematic procedure to compute the constants of the motion needed for finding a superposition principle, using the so-called coalgebra formalism, a technique developed specifically in the context of (quantum) Hamiltonian systems \cite{BBHMR09}, hence valid for LH systems, but not applicable to generic Lie systems. For this reason, LH systems have been studied focusing mainly on their coalgebra symmetry \cite{BCFHL,Ballesteros6}. However, depending on the specific realization of the LH system in terms of Hamiltonian vector fields, suitable coordinate frames may be found that allow us either to linearize or to integrate directly the system, making the explicit construction of superposition rules superfluous. First results in this new direction,  that amounts to analyze the equivalent realizations by Hamiltonian vector fields associated to a given Lie algebra, were obtained in \cite{CFH} for some particular types of LH systems based on $\mathfrak{b}_2$. 

 This work is structured as follows. In Section~\ref{s2}, we review the general properties of LH systems associated to the Lie algebra $\mathfrak{b}_2$ and reconsider their exact integrability. As a new result, we determine the most general form of first-order systems equivalent to these $\mathfrak{b}_2$-LH systems by local diffeomorphisms and, using a basis different from that considered in the classification of LH systems~\cite{LH2015}, we obtain explicit solutions. In Section~\ref{s3}, we focus on complex Bernoulli equations admitting the structure of an LH system, and compute their explicit solution, completing and expanding the results of \cite{Ballesteros6}, where no exact solutions were provided. 

Section~\ref{s4} deals with the quantum deformation of $\mathfrak{b}_2$-LH systems \cite{BCFHL, Ballesteros6, CFH}, for which a solution by quadratures is computed, comparing the results with the explicit solvability of the approximations of $k^{th}$-order in the quantum deformation parameter, which have not been considered previously in the literature. The latter general results for deformed $\mathfrak{b}_2$-LH systems are applied in Section~\ref{s5} to the construction of the corresponding deformed complex Bernoulli differential equations and, furthermore, to the obtention of their exact solution. In Section~\ref{s6} we show that, for multidimensional quantum deformations, the structure of the resulting system is radically different from the undeformed case, although even in this case, the resulting equations can still be solved by quadratures. 

Finally, in Section~\ref{s7} we draw some conclusions and comment on possible extensions or generalizations of our results to the other isomorphism classes of LH systems, as well as their potential use in real-world applications.


\section{Lie--Hamilton systems on the book algebra revisited}
\label{s2}

We  briefly review the main features of LH systems based on the two-dimensional book Lie algebra $\mathfrak b_2$ (details concerning the general formalism of LH system can be found in \cite{LH2015,BHLS,Ballesteros6, CFH} and references therein). Consider the basis $\mathfrak b_2={\rm span}\{ v_A, v_B\}$   with Lie bracket
\be
[v_A,v_B] = - v_B .
\label{b1}
\ee
The generator $v_A$ can be seen as a  dilation operator, while $v_B$ can be interpreted as a translation operator.  As a common property to any Lie algebra~\cite{CP, Abe}, $\mathfrak b_2$ can be endowed with a trivial Hopf algebra structure through the (primitive) coproduct $\Delta : \mathfrak b_2 \to \mathfrak b_2 \otimes \mathfrak b_2 $ defined by
\be
\Delta (v_i)=v_i \otimes 1 + 1\otimes v_i  ,\qquad i=A,B.
\label{b2}
\ee
This map further defines an algebra homomorphism for arbitrary Lie algebra $\mathcal A$, satisfying the coassociativity constraint
\be
({\rm Id}\otimes\Delta)\Delta(a)=(\Delta\otimes {\rm Id})\Delta(a),\qquad \forall a\in \mathcal A.
\label{b2a}
\ee
Hence, the pair $(\mathcal A,\Delta)$ defines a coalgebra.

 We now consider the symplectic representation $D$ of $\mathfrak b_2$
defined by (see e.g.~\cite{Ballesteros6})
\be
h_A:= D(v_A)=  xy,\qquad h_B:= D(v_B)=  - x ,
\label{b4}
\ee
in terms of the canonical variables $(x,y)$ and with respect to the canonical symplectic form
\be
\omega = \dd x\wedge  \dd y.
\label{b3}
\ee
The   Hamiltonian functions (\ref{b4}) satisfy the following Poisson bracket with respect  to $\omega$:
\be
\{ h_A,h_B\}_\omega= - h_B .
\label{b5}
\ee
A realization of the $\mathfrak b_2$-generators in terms of vector fields with Cartesian coordinates $(x,y)\in \mathbb R^2$ is directly deduced from (\ref{b4}), using the inner product associated to $\omega$:
\be
\iota_{{\bf X}_i}\omega={\rm d}h_i ,\qquad  i=A, B,
\label{b6}
\ee
from which the vector fields
 \be
     \>X_A=x\, \frac{\partial}{\partial x}- y\, \frac{\partial}{\partial y}, \qquad    \>X_B=\frac{\partial}{\partial y},\qquad  [\>X_A,\>X_B]=\>X_B ,
\label{b7}
\ee
result. It is straightforward to verify that the invariance condition
\be
{\cal L}_{\mathbf{X}_i}\omega =0, \qquad i=A,B,
\label{b7b}
\ee
is satisfied.
The Hamiltonian functions (\ref{b4}) and the vector fields (\ref{b7}) determine a $t$-dependent Hamiltonian (respectively a $t$-dependent vector field) depending on two real arbitrary parameters $b_A(t)$ and $b_B(t)$ and given by
\begin{equation}
\begin{split}
h&=b_A(t)  h_A +b_B(t)  h_B   =  b_A(t)\,  xy - b_B(t) \,x   , \\[2pt]
  \>X&=b_A(t)   \>X_A +b_B(t)   \>X_B   =  b_A(t) \left( x\, \frac{\partial}{\partial x}- y\, \frac{\partial}{\partial y} \right)    + b_B(t) \, \frac{\partial}{\partial y}  .
\end{split}
 \label{b8}
\end{equation}
Either of these expressions leads to the same (linear) system of non-autonomous  ODEs on $\mathbb R^2$, namely
   \begin{equation}
\frac{\dd x}{\dd t}=b_A(t)\,  x ,
\qquad
\frac{\dd y}{\dd t}=-b_A(t)\, y+ b_B(t).
 \label{b9}
\end{equation}
This system is easily solved by quadratures, with its exact solution being given by
  \be
\begin{split}
x(t)=c_1 \,\eee^{\gam(t)},\qquad y(t)=\left( c_2 +  \int_a^t  \eee^{\gam(u )} b_B(u) \dd u\right)  \eee^{-\gam(t)},\qquad \gam(t):= \int_a^t b_A(s)\dd s ,
\end{split}
\label{b10}
\ee
where $c_1$ and $c_2$ are integration constants determined by the  initial conditions and $a$ is   an appropriate real number that ensures the existence of the integrals over the compact interval $[a,t]$.

The relations (\ref{b7}) clearly show that (\ref{b9}) determines a Lie system~\cite{LSc,PW,LuSa} with associated Vessiot--Guldberg Lie algebra isomorphic to $\mathfrak b_2$. The existence of Hamiltonian vector fields with respect to a Poisson structure implies that the system has a richer geometrical structure, namely that of an LH  system (see~\cite{LH2015,BHLS} and references therein).

In spite of the simplicity of the equations (\ref{b9}) in Cartesian coordinates, LH systems associated to the book algebra $\mathfrak{b}_2$ can appear in disguised form, due to an inappropriate choice of coordinates, which eventually makes it difficult to recognize them as   LH systems, as well as to find a  suitable integration strategy.

Let $\Phi:U\subset\mathbb{R}^2\rightarrow \mathbb{R}^2$ be a (local) diffeomorphism with
\begin{equation}
x= \varphi_1(u,v),\qquad y=\varphi_2(u,v).
\end{equation}
In the new coordinates, we have that
\begin{equation}
{\rm d}x\wedge {\rm d}y= \left(\varphi_{1,u}\varphi_{2,v}-\varphi_{1,v}\varphi_{2,u}\right) {\rm d}u\wedge {\rm d}v=\Xi \, {\rm d}u\wedge {\rm d}v,
\end{equation}
where, as a shorthand notation we have defined
\be
\Xi = \varphi_{1,u}\varphi_{2,v}-\varphi_{1,v}\varphi_{2,u} ,\qquad \varphi_{i,u}=\frac{\partial \varphi_i}{\partial u},\qquad \varphi_{i,v}=\frac{\partial \varphi_i}{\partial v} ,\qquad i=1,2.
\label{xii}
\ee
And the Hamiltonian functions (\ref{b4})  are transformed as follows:
\begin{equation}
\widehat{h}_1=\Phi(h_A) =\Phi(xy)  =\varphi_1(u,v)\varphi_2(u,v),\qquad \widehat{h}_2=\Phi(h_B)=\Phi(-x)=-\varphi_1(u,v).
\end{equation}
A routine computation shows that the transformed Hamiltonian vector fields are given by
\begin{equation}
\begin{split}
{\bf X}_1={\rm d}\Phi({\bf X}_A)&=\frac{\varphi_{1,v}\varphi_{2}+\varphi_{1}\varphi_{2,v}}{\Xi}\,\frac{\partial}{\partial u}-
\frac{\varphi_{1,u}\varphi_{2}+\varphi_{1}\varphi_{2,u}}{\Xi} \, \frac{\partial}{\partial v},\\[2pt]
{\bf X}_2={\rm d}\Phi({\bf X}_B)&=-\frac{\varphi_{1,v}}{\Xi} \, \frac{\partial}{\partial u}+\frac{\varphi_{1,u}}{\Xi}\,\frac{\partial}{\partial v},
\end{split}
\end{equation}
obviously satisfying the invariance condition (\ref{b7b}) and the commutator $\left[{\bf X}_1,{\bf X}_2\right]={\bf X}_2$. The $t$-dependent vector field ${\bf X}= b_A(t){\bf X}_1+b_B(t){\bf X}_2$ then leads to the system in the coordinates $(u,v)$
\begin{equation}\label{syta}
\begin{split}
\frac{{\rm d}u}{{\rm d}t}&=\frac{b_A(t)}{\Xi}\left(\varphi_{1,v}\varphi_{2}+\varphi_{1}\varphi_{2,v}\right)-\frac{b_B(t)}{\Xi}\varphi_{1,v},\\[2pt]
\frac{{\rm d}v}{{\rm d}t}&=-\frac{b_A(t)}{\Xi}\left(\varphi_{1,u}\varphi_{2}+\varphi_{1}\varphi_{2,u}\right)+\frac{b_B(t)}{\Xi}\varphi_{1,u} .
\end{split}
\end{equation}
Independently on the particular shape of the resulting functions, the general solution of the system (\ref{syta}) can be obtained explicitly from (\ref{b10}), using that $\Phi$ is (locally) invertible. Supposed that $\Psi=\Phi^{-1}$ with $u=\psi_1(x,y)$, $v=\psi_2(x,y)$, the solution of (\ref{syta})  is given in terms of the integrals in (\ref{b10}) as
\begin{equation}
u(t)=\psi_1(x(t),y(t)),\qquad v(t)=\psi_2(x(t),y(t)).
\end{equation}

 It should be mentioned that, following the classification of planar LH systems \cite{LH2015}, the Lie algebra $\mathfrak{b}_2$ admits only one equivalence class of realizations as vector fields on the real plane \cite{Kam}. This means that any first-order system that possesses the structure of a Lie system with Vessiot--Guldberg algebra isomorphic to $\mathfrak{b}_2$ is locally diffeomorphic to the system (\ref{b9}), and hence integrable by quadratures. This result can be stated as follows:

 \begin{proposition}\label{PRO2}
Any first-order system
\begin{equation}\label{lins}
\frac{\dd u}{\dd t}=\Theta(t,u,v),\qquad \frac{\dd v}{\dd t}=\Omega(t,u,v)
\end{equation}
such that the associated $t$-dependent vector field \be{\bf X}=\Theta(t,u,v)\frac{\partial}{\partial u}+\Omega(t,u,v)\frac{\partial}{\partial v}\ee
admits the decomposition
\begin{equation}
{\bf X}= f(t){\bf X}_1+g(t){\bf X}_2
\end{equation}
for some functions $f(t),g(t)$ and vector fields ${\bf X}_1,{\bf X}_2$ satisfying
\begin{equation}
\frac{\partial {\bf X}_1}{\partial t}=\frac{\partial {\bf X}_2}{\partial t}=0,\qquad \left[{\bf X}_1,{\bf X}_2\right]=\lambda {\bf X}_1+\mu{\bf X}_2\neq 0
\end{equation}
with $\lambda,\mu$ constants, can be solved by quadratures.
\end{proposition}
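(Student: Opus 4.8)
The plan is to recognise that the hypotheses force the pair $\{\mathbf{X}_1,\mathbf{X}_2\}$ to span a Lie algebra of vector fields isomorphic to $\mathfrak{b}_2$, and then to fall back on the canonical model (\ref{b9})--(\ref{b10}). First I would pass to a more convenient basis of $V:={\rm span}\{\mathbf{X}_1,\mathbf{X}_2\}$. Since the bracket $[\mathbf{X}_1,\mathbf{X}_2]=\lambda\mathbf{X}_1+\mu\mathbf{X}_2$ is nonzero we have $(\lambda,\mu)\neq(0,0)$, so I may set $\mathbf{Y}_B:=\lambda\mathbf{X}_1+\mu\mathbf{X}_2$ and choose constants $a,b$ with $a\mu-b\lambda=1$, putting $\mathbf{Y}_A:=a\mathbf{X}_1+b\mathbf{X}_2$. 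A direct computation gives
\be
[\mathbf{Y}_A,\mathbf{Y}_B]=(a\mu-b\lambda)\,[\mathbf{X}_1,\mathbf{X}_2]=(a\mu-b\lambda)\,\mathbf{Y}_B=\mathbf{Y}_B,
\ee
and the change of basis is invertible because its determinant equals $a\mu-b\lambda=1$. Hence $V$ is a two-dimensional non-abelian Lie algebra, necessarily isomorphic to $\mathfrak{b}_2$, and rewriting $\mathbf{X}_1,\mathbf{X}_2$ in terms of $\mathbf{Y}_A,\mathbf{Y}_B$ turns the $t$-dependent field into $\mathbf{X}=\widetilde b_A(t)\mathbf{Y}_A+\widetilde b_B(t)\mathbf{Y}_B$, with $\widetilde b_A,\widetilde b_B$ constant-coefficient combinations of $f,g$. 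Thus (\ref{lins}) is a Lie system whose Vessiot--Guldberg algebra is $\mathfrak{b}_2$, realised by vector fields on $\mathbb{R}^2$.

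Next I would conclude integrability by quadratures in one of two equivalent ways. The quickest invokes the fact recalled just before the statement: $\mathfrak{b}_2$ admits a single equivalence class of realisations on the plane, so there is a local diffeomorphism $\Phi$ intertwining $\{\mathbf{Y}_A,\mathbf{Y}_B\}$ with the canonical realisation (\ref{b7}); the transported system is then of the form (\ref{b9}), whose explicit solution (\ref{b10}) pulls back through $\Phi^{-1}$ to a solution of (\ref{lins}) expressed by the same quadratures. Alternatively, one can argue directly without citing the classification: on the open set where $\mathbf{Y}_B\neq 0$, the flow-box theorem provides coordinates $(\xi,\eta)$ with $\mathbf{Y}_B=\partial/\partial\eta$; imposing $[\mathbf{Y}_A,\mathbf{Y}_B]=\mathbf{Y}_B$ then forces
\be
\mathbf{Y}_A=P(\xi)\,\frac{\partial}{\partial\xi}+\bigl(R(\xi)-\eta\bigr)\,\frac{\partial}{\partial\eta}
\ee
for some functions $P,R$, so the system becomes $\dot\xi=\widetilde b_A(t)P(\xi)$, $\dot\eta=\widetilde b_A(t)\bigl(R(\xi)-\eta\bigr)+\widetilde b_B(t)$. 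The first equation is separable and hence solved by a quadrature; substituting the resulting $\xi(t)$, the second is a linear first-order ODE, solved by a further quadrature via an integrating factor. Either route establishes the claim.

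The steps that require care are not the algebra but the local analysis. The straightening, and equally the diffeomorphism $\Phi$, exist only away from the singular locus where $\mathbf{Y}_B$ vanishes or where the realisation drops rank (so that $\mathbf{Y}_A,\mathbf{Y}_B$ become parallel); I would check that even in the latter degenerate case the argument survives, since $P\equiv 0$ merely makes $\xi$ constant and leaves the linear $\eta$-equation intact. I would also record, as in (\ref{b10}), that the relevant integrals are taken over a compact interval on which the coefficients are defined, so that the quadratures are genuinely well-defined, and that the conclusion is therefore, like the rest of this section, a local statement. The main obstacle is accordingly the justification of the normal-form reduction (equivalently, the uniqueness of the planar $\mathfrak{b}_2$-realisation), with everything else reducing to routine successive integration.
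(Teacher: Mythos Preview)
Your proof is correct and, in its primary route, follows the same two-step strategy as the paper: first show that $\{\mathbf{X}_1,\mathbf{X}_2\}$ spans a copy of $\mathfrak{b}_2$, then invoke the uniqueness of the planar realisation to reduce to the canonical linear system (\ref{b9})--(\ref{b10}). Your change of basis via $\mathbf{Y}_B=\lambda\mathbf{X}_1+\mu\mathbf{X}_2$ and a complementary $\mathbf{Y}_A$ with $a\mu-b\lambda=1$ is slightly more uniform than the paper's case split ($\lambda\mu=0$ versus $\lambda\mu\neq 0$), but the content is the same.

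Where your argument genuinely adds something is the alternative route via the flow-box theorem. The paper relies entirely on the classification of planar $\mathfrak{b}_2$-realisations~\cite{Kam,LH2015} to produce the rectifying diffeomorphism, whereas your straightening of $\mathbf{Y}_B$ followed by the computation $\mathbf{Y}_A=P(\xi)\partial_\xi+(R(\xi)-\eta)\partial_\eta$ gives a self-contained reduction to a separable equation cascaded with a linear one. This avoids citing an external classification result and makes the local nature of the statement (and the role of the singular locus where $\mathbf{Y}_B$ vanishes or the realisation drops rank) more transparent---points the paper leaves implicit. The trade-off is that the paper's route lands directly on the explicit formula (\ref{b10}) in canonical coordinates, which is what is actually used downstream for the Bernoulli equations, while your flow-box coordinates still carry the undetermined functions $P,R$.
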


\begin{proof}
It suffices to verify that the vector fields ${\bf X}_1,{\bf X}_2$ span a Lie algebra isomorphic to $\mathfrak{b}_2$, the rest following from the previous discussion. If either $\lambda=0$ or $\mu=0$, a scaling transformation shows that ${\bf X}_1,{\bf X}_2$ satisfy the commutator (\ref{b7}), hence the Lie algebra is isomorphic to $\mathfrak{b}_2$. On the contrary, if $\lambda\mu\neq 0$, considering the new vector field $Y_2=\lambda {\bf X}_1+\mu{\bf X}_2$, we get $\left[{\bf X}_1,{\bf Y}_2\right]=\mu\left(\lambda {\bf X}_1+\mu{\bf X}_2\right)=\mu {\bf Y}_2$, showing that the Lie algebra is isomorphic to $\mathfrak{b}_2$. As Lie system (indeed, as LH system), (\ref{lins}) can always be rewritten as
\begin{equation}
\begin{split}
\Theta(t,u,v)= f(t) X_1^1(u,v)+g(t)X_2^1(u,v),\qquad \Omega(t,u,v)= f(t) X_1^2(u,v)+g(t)X_2^2(u,v),
\end{split}
\end{equation}
where \be {\bf X}_k=X_k^1(u,v)\frac{\partial }{\partial u}+X_k^2(u,v)\frac{\partial }{\partial v},\qquad k=1,2.\ee
 \end{proof}

We observe that, within the classification of  LH systems on $\mathbb R^2$ of \cite{LH2015}, the book algebra $\mathfrak{b}_2$ corresponds to the class I$_{14A}^{r=1}\simeq \mathbb{R} \ltimes \mathbb{R}$, although expressed in a different basis that is computationally more cumbersome than that considered in this work. Although, in general, it is a routine task to verify whether a first-order system (\ref{lins}) admits $\mathfrak{b}_2$ as Vessiot--Guldberg algebra, the effective determination of the local diffeomorphism is far from being trivial. It is precisely the additional structure of an LH system with the symplectic form and the invariance condition which allows us to systematically compute the appropriate change of coordinates, and hence to explicitly solve the system.

As an illustrative example, let us consider the system (\ref{be1}) from this perspective, instead of solving or transforming the second-order ODE (\ref{be3}). Clearly, the system is given by ${\bf X}= f(t){\bf X}_1+g(t){\bf X}_2$, with \be \displaystyle {\bf X}_1=u\frac{\partial}{\partial u}+v\frac{\partial}{\partial v},\qquad \displaystyle {\bf X}_2=(u^2-v^2)\frac{\partial}{\partial u}+2uv\frac{\partial}{\partial v},\qquad [{\bf X}_1,{\bf X}_2]={\bf X}_2 ,
\ee   showing that the system (\ref{be1}) is locally diffeomorphic to (\ref{b9}). The noncanonical symplectic form in the $(u,v)$ coordinates is given by \be \omega =\displaystyle \frac{1}{v^2}{\rm d}u\wedge {\rm d}v,\ee  from which we deduce the Hamiltonian functions
\begin{equation}
\begin{split}
\iota_{{\bf X}_1}\omega={\rm d}h_1={\rm d}\left(-\frac{u}{v}\right),\qquad \iota_{{\bf X}_2}\omega={\rm d}h_2={\rm d}\left(-\frac{u^2+v^2}{v}\right),
\end{split}
\end{equation}
with associated Poisson bracket $\left\{h_1,h_2\right\}_{\omega}=\omega\left({\bf X}_1,{\bf X}_2\right)=-h_2$. It is straightforward to verify that, for the parameter functions, we have the identities $f(t)=b_A(t)$ and $g(t)=b_B(t)$. Now, the comparison of the Hamiltonian and symplectic forms in both systems of coordinates allows us to compute easily the (local) change of coordinates, which is given by
\begin{equation}
x=\frac{u^2+v^2}{v},\quad y=-\frac{u}{u^2+v^2}\quad \iff \quad u=-\frac{x^2y}{1+x^2y^2},\quad v=\frac{x}{1+x^2y^2}.
\end{equation}
Inserting (\ref{b10})  into the preceding expressions provides the exact solution of  (\ref{be1}) in the coordinates $u,v$, without requiring further computation. This shows that, at least for the case of $\mathfrak{b}_2$, the main characteristics of LH systems, namely the coalgebra formalism for the construction of constants of the motion and the obtainment of a superposition principle, are not required, as the system can be solved explicitly by quadratures. It should be observed that, as the Lie algebra $\mathfrak{b}_2$ has no nontrivial Casimir invariant, the superposition principles considered in \cite{Ballesteros6} were deduced considering an embedding of the book algebra $\mathfrak{b}_2$ into the oscillator algebra $\mathfrak{h}_4$.

In this framework, it is worth mentioning that (systems of) differential equations that correspond to LH systems based on $\mathfrak{b}_2$ have been considered in various contexts, such as the (generalised) Buchdahl equations arising in the study of relativistic fluids \cite{Buchdahl,Chandrasekar} or some particular  Lotka--Volterra systems with $t$-dependent coefficients; they have been shown to be LH systems in \cite{LH2015,BHLS}. In addition,   $\mathfrak{b}_2$-LH systems that generalize time-dependent epidemic models with stochastic fluctuations have recently been  developed in \cite{CFH}. Nevertheless, an explicit diffeomorphism has only been deduced in the latter work, as well as in \cite{Ballesteros6} for the complex Bernoulli equations here studied, by deriving the corresponding change of variables.


\section{Exact solutions of complex Bernoulli equations}
\label{s3}
As mentioned in the introduction, our aim is to apply Proposition~\ref{PRO2} to the complex Bernoulli differential equations with $t$-dependent real coefficients, in order to obtain an explicit solution for any value of $n$.

 Let us consider the differential equation (\ref{be2})
\be
 \frac{\dd w}{\dd t}=a_1(t)w+a_2(t)w^n,\qquad n\notin\{0,1\},
 \label{c1}
\ee
where $w$ is a complex function, $n$ is a real number and $a_1(t),a_2(t)$ are arbitrary   real $t$-dependent parameters.   We introduce polar coordinates $(r,\theta)$ in the form   $w= r{\rm e}^{{\rm i}\theta}$, thus finding that the differential equation (\ref{c1}) leads to a first-order coupled system:
  \be
\begin{split}
\frac{\dd r}{\dd t} &=  a_1(t) r+a_2(t) r^n\cos[(n-1)\theta ],\\[4pt]
\frac{\dd \theta}{\dd t}&=  a_2(t)  r^{n-1} \sin[(n-1)\theta ].
\end{split}
\label{c2}
\ee
This system was already shown in ~\cite{Ballesteros6,BHLS}  to be an LH system, although its exact solution was not provided. According to the results in the preceding section, the system (\ref{c2}) is linearizable, i.e., it can be reduced via a change of variables to a linear system, although it is not immediately obvious how to find the appropriate new variables. It is precisely the LH formalism which provides a direct and systematic procedure to find such a reduction and hence, to solve the system explicitly. To this extent, we first reformulate the equations in terms of the $t$-dependent vector field
\be
     {\bf Y} = a_1(t) {\mathbf Y}_1+a_2(t) {\mathbf Y}_2,
\label{c3}
\ee
with
\be
{\>Y}_1=r\frac{\partial}{\partial r},\qquad  {\>Y}_2=r^n\cos[(n-1)\theta]\frac{\partial}{\partial r}+r^{n-1}\sin[(n-1)\theta]\frac{\partial}{\partial \theta},
\label{c4}
\ee
and commutator
\be
[{\>Y}_1, {\>Y}_2]=(n-1) {\>Y}_2,
\label{c5}
\ee
showing that ${\bf Y}$ determines a Lie system with Vessiot--Guldberg Lie algebra  isomorphic to $\mathfrak{b}_2$.  A compatible
symplectic form $\omega=f(r,\theta)\dd r\wedge \dd \theta$ is obtained imposing the relation  (\ref{b7b}), from which we deduce that
\be
 \omega = \frac{n-1}{r  \sin^2[(n-1)\theta ]}\,  \dd r\wedge \dd \theta .
 \label{c6}
 \ee
The associated Hamiltonian functions $  h_1$, $h_2$ (see relation (\ref{b6})) of the LH system are given by
\be
  h_1= -\frac{1}{\tan[(n-1)\theta]} \, , \qquad   h_2= -\frac{r^{n-1}}{\sin[(n-1)\theta]}\,  ,
 \label{c7}
\ee
while the corresponding Poisson bracket with respect to the symplectic form (\ref{c6})  reads
\be
\{   h_1,  h_2\}_\omega=-(n-1)  h_2 .
\label{c8}
\ee
As both LH systems (\ref{b9}) and (\ref{c2}) are based on the same Lie algebra $\mathfrak{b}_2$, there must exist a diffeomorphism between the expressions (\ref{b4}) with the canonical symplectic form (\ref{b3}) and variables $(x,y)$ and (\ref{c7}) with the non-canonical symplectic form (\ref{c6}) and the variables $(r,\theta)$. This change of variables is now easily found comparing the expressions for the Hamiltonian functions and symplectic form in both systems of coordinates, leading to the explicit expression
   \be
\begin{split}
x&=\frac{r^{n-1}}{\sin[(n-1)\theta]} \,  ,\qquad y=\frac{\cos[(n-1)\theta]}{(1-n)r^{n-1}}  \, ,\\[4pt]
r &=\left( \frac{x}{\sqrt{1+(n-1)^2 x^2 y^2}} \right)^{\frac{1}{n-1}} \,  ,\qquad  \theta =\frac 1{ (1-n)} \,
\arctan \left(\frac{1}{(n-1) xy}\right).
\end{split}
\label{c9}
\ee
The relationship between the corresponding vector fields, Hamiltonian functions and parameters is easily seen to be
  \be
\begin{split}
\>Y_1&=(n-1)\>X_A,\qquad \>Y_2=\>X_B,\qquad   h_1=(n-1) h_A,\qquad   h_2=h_B,\\[2pt]
a_1(t)&=b_A(t)/(n-1) ,\qquad a_2(t)=b_B(t).
\end{split}
\label{c10}
\ee
From these relations, an explicit solution of the Bernoulli equation (\ref{c2}) is obtained by merely introducing (\ref{b10}) into (\ref{c9}), where
$a_1(t)$,  $a_2(t)$ are arbitrary functions (and $n\notin\{0,1\}$):
   \be
\begin{split}
r(t) &=\left(  \frac{c_1 \,\eee^{\gam(t)}}{\sqrt{1+(n-1)^2 c_1^2\left( c_2 +  \int_a^t  \eee^{\gam(s )} a_2(s) \dd s\right)^2    }}  \right)^{\frac{1}{n-1}}   ,\qquad  \gam(t)=(n-1) \int_a^t a_1(\tau)\dd \tau, \\[4pt]
 \theta(t)& =-\frac 1{ (n-1)} \,
\arctan \left(\frac{1}{(n-1)  c_1\left( c_2 +  \int_a^t  \eee^{\gam(s )} a_2(s) \dd s\right)}\right) \,  . \\[4pt]
\end{split}
\label{c11}
\ee
It can be routinely checked that equations (\ref{c2}) are satisfied, observing that
\be
\sqrt{(n-1)^2 c_1^2\left( c_2 +  \int_a^t  \eee^{\gam(s )} a_2(s) \dd s\right)^2    }=-{(n-1) c_1\left( c_2 +  \int_a^t  \eee^{\gam(s )} a_2(s) \dd s\right)    },
\label{c12}
\ee
with the choice of the negative square root following from the relation
\be
\sqrt{(n-1)^2 x^2 y^2 }= - (n-1) xy = \frac 1{\tan[(n-1)\theta]},
\label{c13}
\ee
according to (\ref{c9}).

 Using the same argumentation as above, it is not difficult to verify that the differential equation
\be
 \frac{\dd w}{\dd t}=a_1(t)w+{\rm i} a_2(t)w^n,\qquad n\notin\{0,1\},
 \label{ca1}
\ee
with $w$ complex, can also be solved by quadratures. With the same change of coordinates $w= r{\rm e}^{{\rm i}\theta}$, the associated first-order system corresponding to the real and imaginary parts is given by
  \be
\begin{split}
\frac{\dd r}{\dd t} &=  a_1(t) r-a_2(t) r^n\sin[(n-1)\theta ],\\[4pt]
\frac{\dd \theta}{\dd t}&=  a_2(t)  r^{n-1} \cos[(n-1)\theta ].
\end{split}
\label{ca2}
\ee
In this case, the corresponding Hamiltonian vector fields  are
\be
{\>Z}_1=r\frac{\partial}{\partial r},\qquad  {\>Z}_2=-r^n\sin[(n-1)\theta]\frac{\partial}{\partial r}+r^{n-1}\cos[(n-1)\theta]\frac{\partial}{\partial \theta},
\label{ca4}
\ee
with commutator $[{\>Z}_1, {\>Z}_2]=(n-1) {\>Z}_2$. Indeed, the vector fields ${\bf Y}_2$ and ${\bf Z}_2$ are related through the orthogonal transformation
\begin{equation}
{\bf Z}_2 = \left(
\begin{array}[c]{cc}
0 & -r \\
r^{-1} & 0\end{array}\right) {\bf Y}_2,
\end{equation}
from which the assertion follows at once.

\section{Deformed Lie--Hamilton systems from the quantum  book algebra}
\label{s4}

Among the most interesting properties of LH systems, that distinguish them clearly from classical Lie systems, we enumerate the possibility of combining them with quantum groups, in order to obtain more general systems of differential equations. Even if such deformed systems are no longer described in terms of Lie algebras, we can still derive ``deformed superposition principles", using the formalism of Poisson--Hopf deformations of LH systems (see \cite{Ballesteros6,BCFHL} and references therein). In this section, the general solution by quadratures of the quantum deformations of $\mathfrak{b}_2$-systems is obtained, with the results being particularized to the deformed complex Bernoulli differential equations, for which a deformed superposition principle was previously studied in~\cite{Ballesteros6}.
 
We introduce the coboundary quantum deformation of the book Lie algebra $\mathfrak{b}_2$  (\ref{b1}) coming from the classical $r$-matrix
\be
r=z\, v_A\wedge v_B ,
\label{d1}
\ee
 which is a solution of the classical Yang--Baxter equation, and where $ z$ is the quantum deformation parameter such that $q={\rm e}^z$. A structure of Lie bialgebra is then determined through the cocommutator map $\delta: \mathfrak{b}_2 \to \mathfrak{b}_2 \wedge \mathfrak{b}_2$  obtained from the classical $r$-matrix (\ref{d1}) as (see e.g. \cite{CP})
 \begin{equation}
\delta(v_i)=[v_i\otimes 1+1\otimes v_i , r], \qquad i=A,B,
\label{d2}
\end{equation}
giving
\be
\delta(v_B)= 0,\qquad  \delta(v_A)=z\, v_B\wedge v_A .
\label{d3}
\ee
This element is just the  skew-symmetric part of the first-order term $\Delta_1$ in the quantum deformation parameter $z$  of the full coproduct $\Delta_z$, that is,
 \be
\begin{split}
\Delta_z(v_i)&=\Delta_0(v_i)+ \Delta_1(v_i)+o[z^2]  , \\[2pt]
 \delta(v_i)&=\Delta_1(v_i)-\sigma\circ \Delta_1(v_i) ,
\end{split}
\label{d4}
\ee
where $ \Delta_0(v_i)=v_i\otimes 1+ 1\otimes v_i $ is the  primitive (non-deformed) coproduct (\ref{b2}) and $\sigma$ is the flip operator: $\sigma(v_i\otimes v_j)=v_j\otimes v_i$.

 For the particular case of $\mathfrak{b}_2$, the quantum book  algebra $U_{   z}(\mathfrak{b}_2)\equiv \mathfrak{b}_{z,2 }$ is  defined by the following deformed coproduct, fulfilling the coassociativity  property (\ref{b2a}):
\be
\begin{split}
\Delta_z(v_A)&=v_A\otimes \eee^{-z v_B}+1\otimes v_A , \\[2pt]
\Delta_z(v_B)&=v_B\otimes 1 + 1 \otimes v_B .
\end{split}
\label{d5}
\ee
The compatible deformed commutation relation is thus given by
\begin{equation}\label{d5a}
[v_A,v_B]_z=- \frac{1-\eee^{-z v_B}}{z} .
\end{equation}
A deformed symplectic representation $D_z$ of $\mathfrak b_{z,2}$  in terms of  the canonical variables $(x,y)$ used in Section~\ref{s2} and possessing the same symplectic form (\ref{b3}) reads as
 \be
h_{z,A}:= D_z(v_A)=  \left( \frac{ \eee^{z x} -1}{z}\right) y,\qquad h_{z,B}:= D_z(v_B)=  - x ,
\label{d7}
\ee
closing on the following deformed   Poisson bracket with respect  to $\omega$:
\be
\{h_{z,A},h_{z,B}\}_\omega=-\frac{1-\eee^{-z h_{z,B}}}{z}\, .
\label{d8}
\ee
Using the relation (\ref{b6}), we easily obtain  the corresponding vector fields as
\be
  \>X_{z,A}=\left( \frac{ \eee^{z x}-1}{z} \right)\frac{\partial}{\partial x}- \eee^{z x} y \, \frac{\partial}{\partial y},
\qquad  \>X_{z,B}= \frac{\partial}{\partial y}.
\label{d9}
\ee
These vector fields span a smooth distribution, in the sense of  Stefan--Sussmann \cite{Ste,Sus}, through the commutator
 \be
 [{\bf X}_{z,A},{\bf X}_{z,B}]=  \eee^{z x} \, {\bf X}_{z,B},
 \label{d10}
 \ee
 and satisfy the invariance condition under the Lie derivative (\ref{b7b}) and the symplectic form $\omega$ (\ref{b3}) .

This construction leads to a deformed $t$-dependent Hamiltonian and  a $t$-dependent vector field depending on two real arbitrary parameters $b_A(t)$ and $b_B(t)$ as follows (compare with (\ref{b8})):
\begin{equation}
\begin{split}
h_z&=b_A(t)  h_{z,A} +b_B(t)  h_{z,B}   =  b_A(t)   \left( \frac{ \eee^{z x} -1}{z}\right) y - b_B(t) \,x   , \\[2pt]
  \>X_z&=b_A(t)   \>X_{z,A} +b_B(t)   \>X_{z,B}   =  b_A(t) \left( \left( \frac{ \eee^{z x}-1}{z} \right)\frac{\partial}{\partial x}- \eee^{z x} y \, \frac{\partial}{\partial y} \right)    + b_B(t) \, \frac{\partial}{\partial y}  ,
\end{split}
 \label{d11}
\end{equation}
giving rise to the following first-order non-autonomous system on $\mathbb R^2$:
\begin{equation}
\frac{\dd x}{\dd t}=  b_A(t) \left( \frac{ \eee^{z x}-1}{z} \right), \qquad
\frac{\dd y}{\dd t}= - b_A(t) \eee^{z x}   y+b_B(t).
 \label{d12}
\end{equation}
This system generalizes (\ref{b9}), in the sense that the expressions (\ref{d7})--(\ref{d12}) reduce to (\ref{b4})--(\ref{b9})  for $\lim z\to 0$.
Indeed, the deformed equation can be transformed into another independent on the deformation parameter $z$ by means of the change of variables \be x=-\frac{1}{z}\ln u, \qquad y=v,\ee leading to
\begin{equation}
\begin{split}
\frac{\dd u}{\dd t}&=  b_A(t) \left( u-1\right),\qquad
\frac{\dd v}{\dd t}= - \frac{b_A(t)}{u}  v+b_B(t).
\end{split}
 \label{d12a}
\end{equation}
These equations can be seen as the general ``canonical" system of differential equations for  the set of deformed LH systems based in the quantum  book algebra $\mathfrak b_{z,2}$, as given by (\ref{d5}). As the first of the equations is separable, the system can also be solved by quadratures. After a short computation, we arrive at the solution
\begin{equation}\label{can1}
u(t)=1+\tilde c_1 {\rm e}^{\gamma(t)},\qquad v(t)=\left(\tilde c_2+\int_a^t b_B(s)\,{\rm e}^{\sigma(s)} {\rm d}s\right){\rm e}^{-\sigma(t)},
\end{equation}
where
\begin{equation}
\sigma(t)=\int_a^t \frac{b_A(s)}{u(s)} {\rm d}s
\end{equation}
and $\gamma(t)$ is the same as given in (\ref{b10}). Equation (\ref{can1}) encompasses the solution of the quantum deformation for all LH systems based on the Lie algebra  $\mathfrak b_{2}$, provided that a proper diffeomorphism (change of variables) is found. We hence conclude that any quantum deformation of a  $\mathfrak b_{2}$-LH system can always be solved by quadratures, and is moreover explicitly integrable.

  In the original coordinates $x,y$, the exact solution is then given by
\be
\begin{split}
x(t)&=-\frac{\ln \left(1- z c_1 \,\eee^{\gam(t)} \right)}z,\qquad \tilde c_1=-z  c_1,\qquad \tilde c_2=  c_2,\\[4pt]
y(t)&=  \exp\left( -\int_a^t \frac {b_A(u)}{1- z c_1\,\eee^{\gam(u)}  } \, \dd u\right)  \left( c_2 +  \int_a^t     \exp\left(  \int_a^u \frac {b_A(v)}{1-  z c_1\,\eee^{\gam(v)}  } \, \dd v\right)  b_B(u) \dd u \right).
\end{split}
\label{d14}
\ee
Observe that the non-deformed limit $z\to 0$ is well defined, reducing to the exact solution (\ref{b10}) of the system (\ref{b9}). 


\subsection{On the approximations of order $k$ in the deformation parameter}
\label{s41}

In a certain sense, the presence of the quantum deformation parameter $z$  can be regarded as the introduction of a perturbation into  the initial LH system (\ref{b9}), in such a manner that  a nonlinear interaction or coupling between the  variables $(x,y)$ in the deformed LH system (\ref{d12}) arises  through the term $\eee^{z x} y$. This fact can be clearly appreciated by considering a power series expansion in $z$ and truncating it at a certain order. The resulting ODEs can be seen as an approximation to the deformed system.

  For the first-order, we obtain the system
\begin{equation}
\begin{split}
\frac{\dd x}{\dd t}&= b_A(t)  \bigl(  x  +\tfrac 12 z    x^2 \bigr) +o[z^2]  , \\[2pt]
\frac{\dd y}{\dd t}&=  - b_A(t) \bigl(   y +z\,     x y  \bigr)+b_B(t) +o[z^2]   ,
\end{split}
 \label{d13}
\end{equation}
which holds for a small value of   $z$. In this approximation, we find that   $z$ introduces a quadratic term $x^2 $ into the first equation,  becoming a (real) Bernoulli  equation, while  in the second, we get a nonlinear interaction term $xy$ (this does not alter the fact that the equation is linear in $y$). The change of variables \be x=\frac 1{z u}\ee reduces the Bernoulli equation to the linear equation
\begin{equation}
\frac{\dd u}{\dd t}= -b_A(t)  \bigl(  u  +\tfrac 12  \bigr),
\end{equation}
from which the solution by quadratures of the system follows at once. After a short algebraic manipulation, we arrive at the general solution for linear approximation in $z$, in the coordinates $x,y$:
\begin{equation}
\begin{split}
x(t)&=2 c_1 \left(2 {\rm e}^{-\gamma(t)}-z c_1\right)^{-1},\qquad
y(t)=  {\rm e}^{-\xi(t)} \left( c_2 +  \int_a^t    b_B(\tau) {\rm e}^{\xi(\tau)}   \dd \tau \right),\\
 \gam(t)&= \int_a^t b_A(s)\dd s , \qquad \displaystyle \xi(t)=\int_a^t b_A(s)\bigl(1+z x(s)\bigr)\, \dd s  .
\end{split}
\end{equation}

In the quadratic approximation, we get the system
\begin{equation}
\begin{split}
\frac{\dd x}{\dd t}&= b_A(t)  \bigl(  x  +\tfrac 12 z    x^2 +\tfrac{1}{6}z^2x^3\bigr) +o[z^3]  , \\[2pt]
\frac{\dd y}{\dd t}&=  - b_A(t) \bigl(   y +z\,     x y  +\tfrac{1}{2}z^2x^2 y\bigr)+b_B(t) +o[z^3].
\end{split}
 \label{d13b}
\end{equation}
The first equation is clearly of Abel type \cite{DAV}, and can be easily seen to admit the integrating factor
\begin{equation}
\mu= \left( z^2x^3+3zx^2+6x\right)^{-1},
\end{equation}
from which the first integral
\begin{equation}
\frac{1}{12}\ln \left(\frac{x^2}{z^2x^2+3zx+6}\right)-\frac{1}{2\sqrt{15}}\arctan\left(\frac{2z^2x+3z}{\sqrt{15}\,z}\right)-\frac{1}{6}\int_a^t b_A(s){\rm d}s = C
\end{equation}
is obtained. In general, even if the last integral can be explicitly solved, the solution of the ODE (hence, the system) cannot be written in explicit form. The same pattern holds for approximations at the order $k\geq 2$, leading to the separable equation
\begin{equation}
\frac{\dd x}{\dd t}= b_A(t)  \bigl(  x  +\tfrac 12 z    x^2 +\tfrac{1}{6}z^2x^3+\dots +\tfrac{1}{(k+1)!}z^{k}x^{k+1}\bigr) +o[z^{k+1}].
\end{equation}
 As the polynomial \be P_k(x)= 1  +\tfrac 12 z    x +\tfrac{1}{6}z^2x^2+\dots +\tfrac{1}{(k+1)!}z^{k}x^{k}\ee is irreducible over the rationals, having at most one real non-rational root, the application of the method of partial fractions or the Hermite method is not of great use in simplifying the integral $\displaystyle\int \frac{{\rm d}x}{x P_k(x)} $, without numerical approximations or integration by series. In any case, as the genus of the algebraic curve $y- xP_k(x)=0$ is zero for any $k\geq 1$, the integral is always rational, hence an elementary function, even if the solution cannot be presented in closed explicit form \cite{Whit,Hau}.

We conclude that, independently on the integral $ \int_a^t b_A(s){\rm d}s$, the system obtained from the $k^{th}$-order approximation cannot be solved explicitly, although it is integrable by quadratures.


\section{Deformed  complex Bernoulli differential equations}
\label{s5}

In order to construct the quantum deformation of the complex Bernoulli differential equations along the lines described in Section~\ref{s3}, we keep the same symplectic form (\ref{c6}), apply the change of variables (\ref{c9}) to the deformed Hamiltonian functions (\ref{d7}) and vector fields (\ref{d9}) in the variables $(x,y)$, and maintain the  relations (\ref{c10}). The deformed Hamiltonian functions in variables $(r,\theta)$  turn out to be
   \be
h_{z,1}= -\frac{\cos[(n-1)\theta]} {z\, r^{n-1}}\left(\exp\left(\frac{z\, r^{n-1}}{\sin[(n-1)\theta]}     \right)-1 \right),\qquad
 h_{z,2}= -\frac{r^{n-1}}{\sin[(n-1)\theta]}
\label{e1}
\ee
 such  that
\be
 \{   h_{z,1},  h_{z,2}\}_\omega=(n-1)\,\frac{ \eee^{-z   h_{z,2}}-1}{z}\, ,
\label{e2}
 \ee
 with respect to the non-canonical symplectic from (\ref{c6}). The corresponding deformed vector fields $\>Y_{z,i}$, related with  (\ref{e1}) by means of (\ref{b6}), are given by
   \be
\begin{split}
 {\>Y}_{z,1}& =\left( r\cos^2[(n-1)\theta]  \exp\left(  \frac{z\, r^{n-1}}{\sin[(n-1)\theta]}  \right) + \frac{  \sin^3[(n-1)\theta]}  {z\, r^{n-2}}  \left( \exp\left(  \frac{z\, r^{n-1}}{\sin[(n-1)\theta]}  \right) -1  \right)\right) \frac{\partial}{\partial r}    \\[2pt]
&     \quad\ + \sin^2[(n-1)\theta]\left( \frac{ \exp\left(\frac{z\, r^{n-1}}{\sin[(n-1)\theta]}     \right)}{\tan[(n-1)\theta]}- \frac{\cos[(n-1)\theta] }{z\, r^{n-1}}  \left( \exp\left(  \frac{z\, r^{n-1}}{\sin[(n-1)\theta]}  \right) -1  \right) \right) \frac{\partial}{\partial \theta} ,   \\[2pt]
 {\>Y}_{z,2}&=r^n\cos[(n-1)\theta]\frac{\partial}{\partial r}+r^{n-1}\sin[(n-1)\theta]\frac{\partial}{\partial \theta} ,
 \end{split}
\label{e3}
\ee
that satisfy the deformed commutator
\be
[ {\>Y}_{z,1}, {\>Y}_{z,2}]=(n-1)   \exp\left(  \frac{z\, r^{n-1}}{\sin[(n-1)\theta]}  \right)  {\>Y}_{z,2} \, ,
\label{e4}
\ee
again corresponding to a distribution in the Stefan--Sussmann sense. We next consider the deformed $t$-dependent Hamiltonian and vector field (\ref{d11}) expressed  in terms of the polar variables $(r,\theta)$ given in (\ref{c9}), thus finding  that the deformed Bernoulli system of differential equations is given by (see e.g.~\cite{Ballesteros6})
 \be
\begin{split}
 \frac{\dd r}{\dd t}&=  a_1(t) \!\left(\! r\cos^2[(n-1)\theta]  \exp\left(  \frac{z\, r^{n-1}}{\sin[(n-1)\theta]}  \right) + \frac{  \sin^3[(n-1)\theta]}  {z\, r^{n-2}}  \left( \exp\left(  \frac{z\, r^{n-1}}{\sin[(n-1)\theta]}  \right) -1  \right)\right)   \\[2pt]
& \quad    +a_2(t) r^n\cos[(n-1)\theta ], \\[2pt]
 \frac{\dd \theta}{\dd t}&=  a_1(t)  \sin^2[(n-1)\theta]\left( \frac{ \exp\left(\frac{z\, r^{n-1}}{\sin[(n-1)\theta]}     \right)}{\tan[(n-1)\theta]}- \frac{\cos[(n-1)\theta] }{z\, r^{n-1}}  \left( \exp\left(  \frac{z\, r^{n-1}}{\sin[(n-1)\theta]}  \right) -1  \right) \right)   \\[2pt]
 & \quad+a_2(t)  r^{n-1} \sin[(n-1)\theta ].
 \end{split}
\label{e5}
\ee
 Taking into account the general solution (\ref{d14}) for the deformed LH systems  with a quantum $\mathfrak{b}_2$-symmetry in (\ref{d12}), the change of variables (\ref{c9}), along with the relations (\ref{c10}), provide the exact solution for the deformed  complex Bernoulli equations (\ref{e5})  for arbitrary values of the parameters $a_1(t)$, $a_2(t)$ and coefficients $n\notin\{0,1\}$:
    \be
\begin{split}
r(t) &=\left(   -\frac{\ln \left(1- z c_1 \,\eee^{\gam(t)} \right)}z \right)^{\frac{1}{n-1}   }   \left\{  1+(n-1)^2\, \frac{\ln^2 \left(1- z c_1 \,\eee^{\gam(t)} \right)}{z^2 }    \right. \,   \\[4pt]
   \times& \left. \exp\left(\! -2\int_a^t \frac {(n-1)a_1(u)}{1- z c_1\,\eee^{\gam(u)}  } \, \dd u\right)  \left(\! c_2 +  \int_a^t     \exp\left(  \int_a^u \frac {(n-1)a_1(v)}{1- z c_1\,\eee^{\gam(v)}  } \, \dd v\right)  a_2(u) \dd u \right)^2  \right\}^{\frac{-1}{2(n-1)}   }   , \\[4pt]
 \theta(t)& =\frac {1}{ (n-1)}   \arctan \left(\frac{ \exp\left( \int_a^t \frac {(n-1)a_1(u)}{1- z c_1\,\eee^{\gam(u)}  } \, \dd u\right) }{(n-1) \, \frac{\ln \left(1- z c_1 \,\eee^{\gam(t)} \right)}z   \left( c_2 +  \int_a^t     \exp\left(  \int_a^u \frac {(n-1)a_1(v)}{1- z c_1\,\eee^{\gam(v)}  } \, \dd v\right)  a_2(u) \dd u \right)   }\right) \,  , \\[4pt]    \gam(t)&=(n-1) \int_a^t a_1(s)\dd s .
\end{split}
\label{e6}
\ee
It can be seen that the system (\ref{e5}) is properly satisfied. For the  corresponding calculations observe that (again, only one of the square roots is admissible)
   \be
\begin{split}  &\sqrt{(n-1)^2\, \frac{\ln^2 \left(1- z c_1 \,\eee^{\gam(t)} \right)}{z^2 }    \left( c_2 +  \int_a^t     \exp\left(  \int_a^u \frac {(n-1)a_1(v)}{1- z c_1\,\eee^{\gam(v)}  } \, \dd v\right)  a_2(u) \dd u \right)^2 }  \\[4pt]
 &\qquad =  (n-1)\, \frac{\ln \left(1- z c_1 \,\eee^{\gam(t)} \right)}{z }    \left(c_2 +  \int_a^t     \exp\left(  \int_a^u \frac {(n-1)a_1(v)}{1- z c_1\,\eee^{\gam(v)}  } \, \dd v\right)  a_2(u) \dd u \right)  ,
\end{split}
\label{e7}
\ee
which is a consequence of the relation (\ref{c13}).

  Finally, we note that, as expected, all   the above expressions reduce to those given in Section~\ref{s3}   under  the non-deformed limit $z\to 0$.


\section{Higher dimensional quantum deformations of $\mathfrak{b}_2$-LH systems}
\label{s6}

So far, we have revisited the exact solution for generic $\mathfrak{b}_2$-LH systems previously given in~\cite{CFH},  and applied this result to obtain the general solution for the complex Bernoulli equation \cite{Ballesteros6,BHLS}. In addition, we have also derived the general solution for the quantum deformation of $\mathfrak{b}_2$-LH systems and, as a byproduct, deduced the exact solution for the deformed complex Bernoulli equation considered in~\cite{Ballesteros6}. These results correspond to systems of differential equations in two variables,  with either the Cartesian $(x,y)$ or polar reference $(r,\theta)$. One of the remarkable features of the underlying (deformed) coalgebra structure of LH systems, determined by the coproduct map, is that it allows us to generalize them to higher dimensions \cite{Ballesteros6,BHLS}. In this section, we explicitly present the  two-dimensional counterpart of the previous results and analyze the strong differences between classical and quantum deformed models. The very same features will appear in any higher dimension.

Let us consider the primitive (non-deformed) coproduct $\Delta$ given by (\ref{b2}) on the tensor product $\mathbb{R}^2\otimes \mathbb{R}^2$. We construct  Hamiltonian functions $h^{(2)}_i$ on  $(\mathbb{R}^2)^2$  with Cartesian coordinates $(x_1,y_1, x_2,y_2)$  using the symplectic realization  $D$ in (\ref{b4}) (see e.g. \cite{Ballesteros6} for details) by means of the prescription
   \be
\begin{split}
h^{(2)}_A&:= (D\otimes D)(\Delta(v_A))=  x_1y_1 +   x_2y_2 ,\\[2pt]
h^{(2)}_B&:= (D\otimes D)(\Delta(v_B))=  - x_1- x_2 ,
\end{split}
\label{f1}
\ee
which obviously satisfies the Lie bracket (\ref{b5}) with respect to the symplectic form
\be
\omega = \dd x_1\wedge  \dd y_1+\dd x_2\wedge  \dd y_2.
\label{f3}
\ee
The associated Poisson bracket is then given by
 \be
\bigl\{   h^{(2)}_{A},  h_{B}^{(2)}\bigr\}_\omega=- h_B^{(2)} ,
\label{f2}
\ee
providing the two-dimensional Hamiltonian (compare with (\ref{b8}))
\be
h^{(2)}=b_A(t)  h^{(2)}_A +b_B(t)  h^{(2)}_B   =  b_A(t)\bigl(  x_1y_1 +x_2y_2\bigr)- b_B(t) \bigl( x_1+ x_2 \bigr).
\label{f4}
 \ee
From the point of view of differential equations, the system resulting from the latter Hamiltonian merely consists of two copies of the non-autonomous  ODEs  (\ref{b9}) on  $(\mathbb{R}^2)^2$, with the canonical variables $(x_1,y_1)$ and $(x_2,y_2)$.
 This property is common to all Lie systems, independently on the dimension of the symplectic representation, and implies that the general solution of such multidimensional systems is immediately obtained from the solution (\ref{b10}) given for the one-dimensional case.

The introduction of a quantum deformation, on the contrary, changes qualitatively the structure of the deformed system. The  quantum deformation of $\mathfrak{b}_2$, with a deformed coproduct $\Delta_z$ defined in (\ref{d5}),  as well as the realization $D_z$ given in (\ref{d7}), leads to a ``two-particle" symplectic realization with the following Hamiltonian functions (see \cite{Ballesteros6} for details):
    \be
\begin{split}
h^{(2)}_{z,A}&:= (D_z\otimes D_z)(\Delta_z(v_A))=  \left( \frac{ \eee^{z x_1} -1}{z}\right) \eee^{z x_2}y_1+\left( \frac{ \eee^{z x_2} -1}{z}\right) y_2 ,\\[2pt]
h^{(2)}_{z,B}&:= (D_z\otimes D_z)(\Delta_z(v_B))=  - x_1- x_2.
\end{split}
\label{f5}
\ee
With respect to the   symplectic form (\ref{f3}), we obtain the bracket
 \be
\bigl\{h^{(2)}_{z,A},h^{(2)}_{z,B}\bigr\}_\omega=-\frac{1-\eee^{-z h^{(2)}_{z,B}}}{z}\,
\label{f6}
\ee
and deformed Hamitonian (compare with Eq. (\ref{d11}))
    \be
\begin{split}
h_z^{(2)}&=b_A(t)  h^{(2)}_{z,A} +b_B(t)  h^{(2)}_{z,B}  \\[4pt]
&=
 b_A(t) \left(   \left( \frac{ \eee^{z x_1} -1}{z}\right) \eee^{z x_2}y_1+\left( \frac{ \eee^{z x_2} -1}{z}\right) y_2 \right)- b_B(t) \bigl( x_1+ x_2 \bigr).
\end{split}
\label{f7}
\ee
The corresponding equations of the motion are thus given by
 \be
\begin{split}
\frac{\dd x_1}{\dd t}&= b_A(t) \left(\frac{\eee^{zx_1}-1}{z}\right) \eee^{zx_2} ,  \qquad
\frac {\dd y_1}{\dd t}=- b_A(t)\,  \eee^{z(x_1+x_2)} y_1+b_B(t) ,\\[4pt]
\frac{\dd x_2}{\dd t}&= b_A(t)\left(\frac{\eee^{zx_2}-1}{z}\right) ,\qquad
 \frac {\dd y_2}{dt}=-b_A(t)\,\eee^{z x_2}\bigl((\eee^{z x_1 }-1)y_1+y_2\bigr) +b_B(t).
 \end{split}
\label{f8}
\ee
Clearly, the limit for $z\rightarrow 0$ leads to the undeformed (and uncoupled) 2-dimensional system. The main feature of the quantum deformed system is that it does not longer consist of two copies of the initial one-dimensional deformed system (\ref{d12}), but adds an additional nontrivial coupling of the variables. In this sense, the resulting differential equations (\ref{f8}) provide a different system that must be solved independently. This fact is a direct consequence of the non-trivial coproduct $\Delta_z$ in (\ref{d5}), which conveys interacting terms from its structure, being a general property for quantum deformations of LH systems (see  \cite{Ballesteros6} for further properties).

As expected, by means of the change of variables $x_k(t)=-\frac{1}{z}\ln u_k(t)$, the system (\ref{f8}) can be transformed into a linear system
\begin{equation}
\begin{split}
\frac{\dd u_1}{\dd t}&= \frac{b_A(t)}{u_2(t)} \left(u_1- 1\right),\qquad \frac {\dd y_1}{\dd t}=- b_A(t)\frac{y_1}{u_1\;u_2}+b_B(t) ,\\[4pt]
\frac{\dd u_2}{\dd t}&= b_A(t) \left(u_2- 1\right),\qquad
 \frac {\dd y_2}{dt}=\frac{b_A(t)}{u_2}\left(\frac{u_1-1}{u_1}\, y_1-y_2\right) +b_B(t),
 \end{split}
\label{f8a}
\end{equation}
that can again be easily solved by quadratures. We observe, in particular, that (\ref{f8a})   no longer depends on the deformation parameter $z$. In a certain sense, we can be consider this system as the prototype of a ``canonical form" for the quantum deformation in $z$.  Notice, in particular, that the first and third equations (second and fourth equations, respectively) are not copies, but differ in a genuine deformation term.

We conclude that, if one looks for higher dimensional Bernoulli equations in the non-deformed case, one will merely find a system consisting of copies of the undeformed ODE (see Section~\ref{s3}), while the deformed Bernoulli equations considered in  Section~\ref{s5} lead to new systems for each of the considered dimensions. This implies, in particular, that if an LH system can be solved explicitly, its multidimensional version will have the same property. Whether this holds for the quantum deformed systems, is still an open question.


\section{Conclusions and final remarks}
\label{s7} 

In this work we have analyzed the exact integrability of LH systems, enlarging previous work begun in \cite{CFH}. 
Although LH systems (and their quantum deformations) are primarily characterized by the possibility of obtaining a nonlinear (deformed)  superposition principle \cite{BHLS,Ballesteros6}, it has been shown that, for the case of the Lie algebra $\mathfrak{b}_2$, the geometric formalism can be used to obtain exact solutions of first-order systems, both at the classical and quantum deformed level, whenever an appropriate system of coordinates is found. Using a suitable local diffeomorphism, a realization by Hamiltonian vector fields is found that enables us to reduce the differential equations to a linear system, from which an explicit solution is obtained by quadratures. 
The generic form of first-order systems (locally) diffeomorphic to $\mathfrak{b}_2$-LH systems is found using the equivalent reformulation of the system in terms of Hamiltonian functions and the compatibility condition with the canonical symplectic form. In general, it is a routine task to check whether a given system of first-order ordinary differential equations defines a Lie system, and whether it admits a compatible symplectic structure that transforms it into an LH one. If such a system is known to admit a (local) system of coordinates that allows an integration by quadratures of the resulting differential equations, the approach proposed in this work may be  more adequate than a direct analysis of the system by other methods, at least computationally, as the solution in arbitrary coordinates can be deduced directly applying the inverse of a local diffeomorphism. The argument remains valid for   quantum deformations, even if formally they do not define an LH system any more, and where the deformation parameter can be seen as a perturbation term of the initial system that makes it possible to determine the approximation of $k^{th}$-order in the deformation parameter.  
The procedure has been applied to obtain exact solutions of the so-called complex Bernoulli equations, whenever they form an LH system. Besides the quantum deformation, the approximation of $k^{th}$-order in the deformation parameter has been studied, showing that it can be integrated by quadratures, although an explicit solution is no longer possible for orders $k\geq 3$. These results may be of interest for solving numerically these equations, as the integrals can always be approximated with standard methods. A question that arises naturally in this context is whether the resulting numerical approximation improves or is computationally more efficient than well-established techniques, such as the predictor-corrector or the Heun methods \cite{But}. A comparison of both approaches is worthy to be analyzed in detail.

Concerning the multidimensional deformation, it differs considerably from the multidimensional undeformed analogue, in the sense that genuine deformation terms are obtained and prevent that the resulting system merely consists of copies of the one-dimensional equation. In this situation, illustrated for the two-dimensional case, it has also been shown that the deformation is solvable by quadratures.

Besides the complex Bernoulli equation, using Proposition~\ref{PRO2}, the results on $\mathfrak{b}_2$-LH systems can also be used to determine exact solutions of other first-order systems strongly related to these equations. Consider for example the change of variables
\begin{equation}
x=u^pv^{-r},\qquad y=-v^q u^{-m} ,\qquad \Xi= \Lambda (u^{p-m-1}v^{q-r-1}),\qquad \Lambda=mr-pq\neq 0 .
\end{equation}
 With these new coordinates, the system (\ref{b9}) is transformed into (see Eq. (\ref{syta}))
\be
\begin{split}
\frac{\dd u}{\dd t}&= \frac{b_A(t)(r-q)}{\Lambda}\,u(t)+ \frac{r\,b_B(t)}{\Lambda\, v(t)^{q}}\,u(t)^{m+1},  \\[4pt]
\frac{\dd v}{\dd t}&= -\frac{b_A(t)(m-p)}{\Lambda}\,v(t)+ \frac{p\,b_B(t)u(t)^m}{\Lambda}\, v(t)^{1-q},\\[4pt]
\end{split}
\label{f18}
\ee
where the variables are non-trivially coupled. We observe that each of these equations, when considered separately from the other, corresponds formally to a real Bernoulli equation, so that system (\ref{f18}) can be interpreted as a coupled Bernoulli system.  Independently of the functions chosen for the coefficients $b_i(t)$, the system can be solved by quadratures using (\ref{b10}). Considering now the composition of the change of variables leading to the ``canonical" form (\ref{d12a}) of the quantum deformation of (\ref{b9}) with the change of variables defined above, we are led to the system
\be
\begin{split}
\frac{\dd u}{\dd t}&=b_A(t)\, \frac{(r+q)v(t)^r-q\,u(t)^p}{\Lambda\,u(t)^{p-1}}+ \frac{r\, b_B(t)}{\Lambda\,v(t)^{q}}\,u(t)^{m+1},  \\[4pt]
\frac{\dd v}{\dd t}&= -b_A(t)\,\frac{m\,u(t)^p-(m+p)v(t)^r}{\Lambda}\,v(t)+ \frac{p\,b_B(t)u(t)^m}{\Lambda}\,v(t)^{1-q},\\[4pt]
\end{split}
\label{f21a}
\ee
which   no longer corresponds to a coupling of Bernoulli equations for arbitrary values of $p,q,r,m$, although its general solution can always be obtained using Eq. (\ref{d14}). For some particular values, the resulting system also corresponds to the non-trivial coupling of Bernoulli equations. For example, for the values $p=-1$, $m=2$ and $r=-q$, the system (\ref{f21a}) reduces to
\be
\begin{split}
\frac{\dd u}{\dd t}&=  b_A(t)u(t)+ \frac{b_B(t)}{v(t)^q}\,u(t)^{3},  \\[4pt]
\frac{\dd v}{\dd t}&= \frac{2b_A(t)}{q \, u(t)}\, v(t)- \frac{b_A(t) }{q}\, v(t)^{1-q}+\frac{b_B(t)u(t)^2}{q}\, v(t)^{1-q},\\[4pt]
\end{split}
\label{f22a}
\ee
where both equations are still of Bernoulli type.

Potential applications or extensions of this work concern the analysis and exact solutions of second-order ordinary differential equations admitting $\mathfrak{b}_2$ as Vessiot--Guldberg algebra \cite{C132}, as well as their quantum deformation and multidimensional counterparts. Further, considering $\mathfrak{b}_2$ as a subalgebra of more general Lie algebras associated to LH systems, it is conceivable to extend the study of exact solutions to the quantum deformed systems of LH systems related to the oscillator algebra $\mathfrak{h}_4$ \cite{BHLS,LH2015,Ballesteros6}, either from the formal point of view, or taking into account specific applications \cite{CFH}.

In a general context, a problem not yet studied in detail is the systematic analysis of the solvability by quadratures, or the obtainment of criteria that ensure the existence of explicit solutions of LH systems and their (multidimensional) generalizations and deformations, as well as the generic description of systems that are (locally) equivalent to an LH system associated to a given Lie algebra. Another question to be settled is the precise identification of LH structures with systems of ordinary differential equations appearing in real-world applications, specifically with those systems possessing time-dependent coefficients, such as {\it e.g.} some specific types of Lotka--Volterra systems (see \cite{LH2015,BHLS} and references therein), where the procedure may provide an alternative approach to obtain either exact or approximate solutions, once the parameters have been appropriately adjusted. Work in these various directions is currently in progress.


\section*{Acknowledgements}

 \small
 
R.C.S.~and F.J.H.~have been partially supported by Agencia Estatal de Investigaci\'on (Spain)  under grant  PID2019-106802GB-I00/AEI/10.13039/501100011033.  F.J.H.~also acknowledges support  by the Regional Government of Castilla y Le\'on (Junta de Castilla y Le\'on, Spain) and by the Spanish Ministry of Science and Innovation MICIN and the European Union NextGenerationEU  (PRTR C17.I1).

\newpage

\small



\end{document}